\newtheorem{theorem}{}
\newtheorem{proposition}[theorem]{Proposition}
\newenvironment{proof}[1][Proof]{\noindent\textbf{#1.} }{\ \rule{0.5em}{0.5em}}
\begin{document}

\title{A Hamiltonian approach to Thermodynamics}

%PACS:
%05.70.Ce Thermodynamic functions and equations of state
%45.20.Jj Lagrangian and Hamiltonian mechanics
%02.40.Yy Geometric mechanics

\author{M. C. Baldiotti}
\email{baldiotti@uel.br}
\affiliation{Departamento de F\'{\i}sica, Universidade Estadual de Londrina, 86051-990,
Londrina-PR, Brazil.}

\author{R. Fresneda}
\email{rodrigo.fresneda@ufabc.edu.br}
\affiliation{Universidade Federal do ABC, Av. dos Estados 5001, 09210-580, Santo
Andr\'{e}-SP, Brazil}

\author{C. Molina}
\email{cmolina@usp.br}
\affiliation{Escola de Artes, Ci\^{e}ncias e Humanidades, Universidade de S\~{a}o Paulo \\ 
Av. Arlindo Bettio 1000, CEP 03828-000, S\~{a}o Paulo-SP, Brazil}

\begin{abstract}
In the present work we develop a strictly Hamiltonian approach to
Thermodynamics. A thermodynamic description based on symplectic geometry
is introduced, where all thermodynamic processes can be described
within the framework of Analytic Mechanics. Our proposal is constructed
on top of a usual symplectic manifold, where phase space is even dimensional
and one has well-defined Poisson brackets. The main idea is the introduction
of an extended phase space where thermodynamic equations of state
are realized as constraints. We are then able to apply the canonical
transformation toolkit to thermodynamic problems. Throughout this
development, Dirac's theory of constrained systems is extensively
used. To illustrate the formalism, we consider paradigmatic examples,
namely, the ideal, van der Waals and Clausius gases.
\end{abstract}

\pacs{05.70.Ce, 47.10.Df, 02.40.Yy}

\maketitle

\section{Introduction}

The use of the theoretical structure of Analytic Mechanics, with the
introduction of a principle of stationary action and a Lagrangian
description, allowed a unification of many areas of Classical Physics,
including General Relativity. Also, this structure provides a more
straightforward way to the quantum description of important physical
systems. Both in the classical and quantum treatment, Dirac's theory
of constrained systems \cite{dir} has a central role. In this sense,
Analytic Mechanics provides a unified language for much of the Physics
landscape.

The same degree of generalization, albeit in a different path, is
given by Thermodynamics. This theory ignores the internal (microscopic)
structure of the system to be described, treating it as a ``black
box''. The final states of a given system can be described by a reduced
number of variables which effectively implement the interaction of
the black box with its environment. In this way, the description provided
by Thermodynamics is generally adequate, even if the system of interest
is a gas in a recipient or a black hole (which is fundamentally a
true black box, at least in classical terms).

Previous attempts at unifying the theoretical framework of Analytic
Mechanics and Thermodynamics followed two distinct paths, one alongside
a more physical standpoint, and another towards a geometrical unification.
As an example of the first, one can cite \cite{Peter79}, where Poisson
brackets (PB) between thermodynamic variables are defined, but the
issue of the invariance of the new PB with respect to canonical transformations
is not completely clarified. The second approach has its origin with
Gibbs \cite{gibbs} and Caratheodory \cite{caratheodory}, and culminated
with the work of Hermann \cite{hermann}. In a nutshell, the geometric
approach assigns a contact structure to the thermodynamic phase space,
such that the Legendre submanifolds describe equilibrium states. One
then defines a pseudo-Riemannian metric on the phase space which is
compatible with the contact structure. The contact structure is responsible
for encoding the first law, while the metric structure encodes the
second law \cite{mrugala1990}. Notwithstanding the conceptual clarity
of the geometric approach, the meaning of the of length of curves
in the thermodynamic state space is not completely understood given
the various proposals and interpretations \cite{salamon1983,mrugala1990,crooks2007,bravetti2015},
as well as the physical meaning of contact symmetries and contact
vector field flow, analogs of symplectomorphisms and Hamiltonian flow
in  Mechanics \cite{mrugala1993,bravetti2015}.

Another important development on the subject was presented in \cite{rajeev2008}.
In this work, a Hamilton-Jacobi formalism is proposed starting from
the geometric approach, i.e., a contact manifold. By solving the Hamilton-Jacobi
equation, one is able to recover all equations of state. In a different
work \cite{rajeev2008b}, the author quantizes the previous approach
in the framework of deformation quantization of contact manifolds.
However, the presence of non-conventional structures, such as the
odd thermodynamic phase space, Lagrange brackets and the resulting
non-standard algebra of observables, make it difficult to obtain recognizable
physical features, e.g., thermodynamic uncertainty relations.

\newpage

Contrasting with previous developments, we present a 
strictly Hamiltonian approach to Thermodynamics. Our proposal
sets aside the contact manifold framework and is constructed on top
of a usual symplectic manifold, where phase space is even dimensional
and one has well-defined Poisson brackets. The main idea is the introduction
of an extended phase space where thermodynamic equations of state
are realized as constraints. We are then able to apply the canonical
transformation toolkit to thermodynamic problems, and with little
effort we solve van der Waals and Clausius gases from the much simpler
ideal gas. Finally, our approach allows a Lagrangian description of
Thermodynamics.

The structure of this work is as follows. In section~\ref{ICPB}
we consider the symplectic structure involving thermodynamic variables
in the present approach, introducing suitable Poisson brackets and
analyzing related integrability issues. In section~\ref{eps} the
formal structure for the extended phase space is rigorously developed.
The constraint structure of our formalism and Lagrangian description
of thermodynamic systems in the context introduced here are discussed
in section~\ref{csld}. To illustrate the formalism, we consider
paradigmatic examples in section~\ref{vdw}, namely ideal, van der
Waals and Clausius gases. Final considerations and some perspectives
of future developments are presented in section~\ref{conclusion}.

\section{\label{ICPB}Integrability and Poisson brackets}

There have been works dedicated to the attempt of establishing a symplectic
structure involving thermodynamic variables \cite{Peter79,gambar1994}.
In fact, the duality between Mechanics and Thermodynamics is due to
the possibility of writing the integrability conditions for thermodynamic
variables as Poisson brackets once one has identified the thermodynamic
variables with coordinates and momenta of some phase-space.

However the analogy cannot be taken too far, since it is not a priori
clear how to translate Hamiltonian trajectories in phase-space, i.e.,
solutions of the Hamilton equations of motion $\dot{q}=\left\{ q,H\right\} $,
$\dot{p}=\left\{ p,H\right\} $, to the thermodynamic context. In
particular, a relevant question is how to interpret the evolution
parameter present in the Hamilton equations. On the other hand, the
different integrability conditions in Thermodynamics are dependent
on the chosen potential, and are related by Legendre transformations.
Another important issue is then how to translate the thermodynamic
developments back to the mechanics context. We shall address these
questions in the present work by providing a framework for incorporating
potential independence in Thermodynamics and at the same time giving
meaning to the evolution parameter for the Hamilton trajectories.

Here and in what follows, a Hamiltonian system is composed of the
triple $\left(M,\omega,X_{H}\right)$, $M$ is a smooth manifold,
$\omega$ the canonical symplectic form on $M$ and $X_{H}$ the Hamiltonian
vector field. Let us remind our reader that given a set of local Darboux
coordinates $\left(q^{i},p_{i}\right)$, $i=1,...,n$, on an open
set of $M$, with $\omega=dp_{i}\wedge dq^{i}$ , and a canonical
transformation $C:\mathbb{R}^{2n}\rightarrow\mathbb{R}^{2n}$, $C(p,q)=\left(P,Q\right)$,
it is always possible to find among the $2n$ variables $P_{i}$ and
$Q^{i}$, a set of $n$ independent variables $\left\{ y^{i}\right\} _{i=1}^{n}$
such that $\det\frac{\partial^{2}S}{\partial y^{i}\partial q^{j}}\neq0$
where $S\left(q,y\right)$ is a generating function of the canonical
transformation $C$ (see \cite{arn}). We are adopting the sum convention
(from $1$ to $n$) over repeated indices.

To fix ideas, let $q^{i}$ and $Q^{i}$ be a set of independent coordinates,
and $S\left(q,Q\right)$ the generating function satisfying $dS=p_{i}dq^{i}-P_{i}dQ^{i}$.
Then 
\begin{align}
p_{i} & =p_{i}\left(q,Q\right)=\frac{\partial S}{\partial q^{i}}\left(q,Q\right)\,,\nonumber \\
P_{i} & =P_{i}\left(q,Q\right)=-\frac{\partial S}{\partial Q^{i}}\left(q,Q\right)\,.
\end{align}
Therefore, taking into account that the $p_{i}$ are functions of
$q^{i}$ in this setting, the Poisson brackets 
\begin{equation}
\left\{ p_{i},p_{j}\right\} _{p,q}=\frac{\partial p_{i}}{\partial q^{j}}-\frac{\partial p_{j}}{\partial q^{i}}\equiv0
\end{equation}
express a subset of the integrability conditions for $S$ arising
from $d^{2}S\equiv0$, where the subscript in $\{\cdot,\cdot\}_{p,q}$
means the Poisson brackets are evaluated in the variables $q^{i}$
and $p_{i}$. By performing Legendre transformations, one can obtain
more integrability conditions. For instance, $S^{\prime}=S-p_{i}q^{i}$
is a generating function for $C$ such that $dS^{\prime}=-q^{i}dp_{i}-P_{i}dQ^{i}$,
$q^{i}=q^{i}\left(p,P\right)=-\partial S^{\prime}/\partial p_{i}$,
$P_{i}=P_{i}\left(p,P\right)=\partial S^{\prime}/\partial Q^{i}$.
It follows from $d^{2}S^{\prime}\equiv0$ that 
\begin{equation}
\left\{ q^{i},q^{j}\right\} _{p,q}=\frac{\partial q^{j}}{\partial p_{i}}-\frac{\partial q^{i}}{\partial p_{j}}\equiv0\,.
\end{equation}

As an example, let $(q^{i},p_{i})_{i=1,2}$ be canonical coordinates
in $\mathbb{R}^{4}$, and $C:\mathbb{R}^{4}\rightarrow\mathbb{R}^{4}$
a canonical transformation $(q^{i},p_{i})\mapsto(q^{\prime i},p_{i}^{\prime})$
generated by $F(q^{1},p_{2},q^{\prime1},q^{\prime2})=S(q,q^{\prime})-q^{2}p_{2}$,
where $S(q,q^{\prime})$ is as above. One has in particular $p_{1}=p_{1}(q^{1},p_{2},q^{\prime})$
and $q^{2}=q^{2}(q^{1},p_{2},q^{\prime})$. Then a subset of the integrability
conditions of $F$ can be written as 
\begin{equation}
\{p_{1},q^{2}\}_{p,q}=-\frac{\partial q^{2}}{\partial q^{1}}-\frac{\partial p_{1}}{\partial p_{2}}\equiv0~.\label{eq:integra-cond}
\end{equation}
For a thermodynamic analog with coordinates $(S,T,P,V)$ and internal
energy $dU=TdS-PdV$, we identify thermodynamic and mechanic coordinates
by $q^{1}=S$, $p_{1}=T$, $q^{2}=V$, $p_{2}=-P$. As a result, the
integrability condition~(\ref{eq:integra-cond}) gives the Maxwell
relation 
\begin{equation}
\left.\frac{\partial V}{\partial S}\right\vert _{P}=\left.\frac{\partial T}{\partial P}\right\vert _{S}\,.
\end{equation}
Considering other generating functions related to the possible different
choices of independent variables, one can obtain the remaining Maxwell
relations in a similar fashion.%
\footnote{In a context restricted to thermodynamic variables, this technique
was presented in \cite{Peter79} with the introduction of a Poisson
bracket, which in our exposition follows from the assumption of an
underlying symplectic structure.%
}

Consider now a a mechanical problem in the usual phase space $(T^{\ast}\mathbb{R},dp\wedge dq)$
with coordinates $\left(q,p\right)$. Let us extend this space by
including the new canonical pair $\left(\tau,\pi\right)$. We then
define in this extended space the Poisson bracket 
\begin{equation}
\left\{ f,g\right\} =\frac{\partial f}{\partial q}\frac{\partial g}{\partial p}-\frac{\partial f}{\partial p}\frac{\partial g}{\partial q}-\left(\frac{\partial f}{\partial\tau}\frac{\partial g}{\partial\pi}-\frac{\partial f}{\partial\pi}\frac{\partial g}{\partial\tau}\right)\,,
\end{equation}
which gives the canonical relations $\left\{ q,p\right\} =\left\{ \pi,\tau\right\} =1$
and $\left\{ p,\pi\right\} =\left\{ q,\pi\right\} =0$. We can choose
any non-canonical pair as independent variables. For example, by choosing
$\left(q,\tau\right)$ as independent variables, we can evaluate the
PB of $p=p\left(q,\tau\right)$ and $\pi=\pi\left(q,\tau\right)$,
\begin{equation}
\left\{ p,\pi\right\} =\frac{\partial p}{\partial q}\frac{\partial\pi}{\partial p}-\frac{\partial p}{\partial p}\frac{\partial\pi}{\partial q}-\frac{\partial p}{\partial\tau}=\left\{ p,\pi\right\} _{p,q}-\frac{\partial p}{\partial\tau}\,,
\end{equation}
where $\left\{ p,\pi\right\} _{p,q}$ is the usual Poisson bracket
in the phase space $(T^{\ast}\mathbb{R},dp\wedge dq)$. From the canonical
relation $\left\{ p,\pi\right\} =0$ we have $\left\{ p,\pi\right\} _{p,q}=\partial p/\partial\tau$.
If $p$ has no explicit dependence in $q$ and we can identify $\pi$
with the Hamiltonian of the system, this expression becomes one of
Hamilton's equation.

So, if we can actually enforce these conditions, that is, if we can
construct a mechanical description in an ``extended space\textquotedblright{},
we will be able to treat mechanic and thermodynamic problems on the
same footing. In the next section we will demonstrate how this construction
can be implemented using Dirac theory for systems with constraints.

\section{\label{eps}Extended phase space}

We now detail the formal structure for the extended phase space and
see how the original Hamiltonian system can be recast in this space.
Let $dp_{i}\wedge dq^{i}$ be the symplectic form in local coordinates
in an open set of the cotangent bundle $T^{\ast}Q$, where $Q$ is
some configuration space manifold of dimension $n$. Let $N=Q\times\mathbb{R}$
and consider now the cotangent bundle $T^{\ast}N$ with a symplectic
form given by 
\begin{equation}
\omega=dp_{i}\wedge dq^{i}+d\pi\wedge d\tau\,.
\end{equation}
Let $H:T^{\ast}(N)\rightarrow\mathbb{R}$ be a function whose expression
in local coordinates is given by 
\begin{equation}
H=\pi+h(q,p,\tau)\,,
\end{equation}
where $h(q,p,\tau)$ is some function on the phase space $T^{\ast}N$.
In coordinates, the Poisson brackets have the expression 
\begin{equation}
\{f,g\}=\frac{\partial f}{\partial q^{i}}\frac{\partial g}{\partial p_{i}}-\frac{\partial f}{\partial p_{i}}\frac{\partial g}{\partial q^{i}}+\frac{\partial f}{\partial\tau}\frac{\partial g}{\partial\pi}-\frac{\partial f}{\partial\pi}\frac{\partial g}{\partial\tau}\,,\label{eq:poisson}
\end{equation}
where $f$ and $g$ are functions on $T^{\ast}N$. Hence, the nonvanishing
canonical PB are 
\begin{equation}
\left\{ q,p\right\} =\left\{ \tau,\pi\right\} =1\,.
\end{equation}
Let $(p(t),q(t))$ be a trajectory in the phase space $T^{\ast}N$.
The Hamiltonian phase flux is generated by the field 
\begin{equation}
X_{H}=\frac{\partial}{\partial\tau}+\frac{\partial h}{\partial p_{i}}\frac{\partial}{\partial q^{i}}-\frac{\partial h}{\partial q^{i}}\frac{\partial}{\partial p_{i}}-\frac{\partial h}{\partial\tau}\frac{\partial}{\partial\pi}\,.
\end{equation}
Thus, an integral curve $\gamma$ of $X_{H}$, i.e., a curve that
satisfies $\dot{\gamma}=X_{H}(\gamma(t))$, gives the Hamilton equations
\begin{align}
 & \frac{dq^{i}}{dt}=X_{H}(q)=\frac{\partial h}{\partial p_{i}}~,\ \frac{dp_{i}}{dt}=X_{H}(p)=-\frac{\partial h}{\partial q^{i}}\,,\nonumber \\
 & \frac{d\pi}{dt}=X_{H}(\pi)=-\frac{\partial h}{\partial\tau}~,\ \frac{d\tau}{dt}=X_{H}(\tau)=1\,.
\end{align}
If $\eta$ denotes the set of coordinates in $T^{\ast}N$, one has
\begin{equation}
\frac{d\eta}{dt}=X_{H}(\eta)=\{\eta,H\}\,.
\end{equation}
For $f$ a function that does not depend on $\pi$, its time evolution
is given by 
\begin{equation}
\frac{df}{dt}=X_{H}(f)=\{f,H\}=\frac{\partial f}{\partial\tau}+\frac{\partial f}{\partial q^{i}}\frac{\partial h}{\partial p_{i}}-\frac{\partial f}{\partial p_{i}}\frac{\partial h}{\partial q^{i}}\,.
\end{equation}

If one considers the constraint surface $H=0$ and the relation $d\tau=dt$
resulting from the time evolution equation of $\tau$, then the tautological
form in $T^{\ast}N$ degenerates to the Poincar\'{e}-Cartan form in $T^{\ast}Q\times\mathbb{R}$
with Hamiltonian $h$: 
\begin{equation}
\left.p_{i}dq^{i}+\pi d\tau\right\vert _{H=0}=p_{i}dq^{i}-hdt\,.
\end{equation}
Therefore, we can formulate the mechanics on the contact space $T^{\ast}Q\times\mathbb{R}$
as mechanics on the extended phase space $T^{\ast}N$ with the constraint
$H=0$, whose canonical Hamilton function is $H_{c}=\lambda H$ and
$\lambda$ is a Lagrange multiplier, plus possible additional linear
combinations of primary constraints \cite{dir,gitman-tyutin}.

We will now impose the chronological constraint 
\begin{equation}
\psi=\tau-t\,.\label{vc}
\end{equation}
This constraint formalizes the idea of time as a phase space coordinate.
With this, we obtain a second class constraint theory, since $\{\psi,H\}=1$.
The conservation in time of the constraint $\psi$ is 
\begin{equation}
\frac{d\psi}{dt}=\frac{\partial\psi}{\partial t}+\{\psi,\lambda H\}=-1+\lambda=0\,.
\end{equation}
Hence, we have that $\lambda=1$, and the canonical Hamiltonian is
simply $H$. We can also make the canonical transformation to new
variables $q^{\prime i}=q^{i}$, $p_{i}^{\prime}=p_{i}$, $\tau^{\prime}=\tau-t$
and $\pi^{\prime}=\pi$, given by the generating function $W(q,\tau,p^{\prime},\pi^{\prime},t)=q^{i}p_{i}^{\prime}+\left(\tau-t\right)\pi^{\prime}$.
It follows that the constraint surface becomes 
\begin{equation}
\phi_{1}=\pi+h(q,p,\tau^{\prime}+t)=0\,,\,\phi_{2}=\tau^{\prime}=0\,,
\end{equation}
and the Hamiltonian is $H^{\prime}=H+\partial W/\partial t=h$. The
time evolution of the quantities $\eta=\left(q,p,\tau^{\prime},\pi\right)$
is given in terms of Dirac brackets \cite{gitman-tyutin}, 
\begin{equation}
\frac{d\eta}{dt}=\{\eta,h\}_{D}\,.
\end{equation}
However, due to constraint relations between constraints and the reduced
 Hamiltonian $h$, the Dirac brackets reduce to Poisson brackets 
\begin{align}
 & \frac{dq^{i}}{dt}=\{q^{i},h\}=\frac{\partial h}{\partial p_{i}}\,,\,\frac{dp_{i}}{dt}=\{p_{i},h\}=-\frac{\partial h}{\partial q^{i}}\,,\,\\
 & \frac{d\pi}{dt}=0\,\,,\frac{d\tau^{\prime}}{dt}=0\,,\phi_{1}=\phi_{2}=0\,,
\end{align}
and therefore 
\begin{equation}
\frac{d}{dt}f(q,p,t)=\left.\{f,h\}\right\vert _{\phi=0}=\frac{\partial f}{\partial t}+\frac{\partial f}{\partial q^{i}}\frac{\partial h}{\partial p_{i}}-\frac{\partial f}{\partial p_{i}}\frac{\partial h}{\partial q^{i}}\,.
\end{equation}
In this way, the initial dynamics in the extended phase space with
Hamiltonian $\lambda H$ and gauge fixing condition $\tau=t$ is equivalent
to the dynamics in phase space $(q,p)$ with the reduced Hamiltonian
$h$.

\section{\label{csld}Constraint structure and Lagrangian}

Let $\left\{ q^{i}\right\} _{i=1}^{n}$ denote the set of extensive
parameters of a thermodynamic system, such as volume or entropy. Then
the internal energy $u$ is a first-order homogeneous function of
$q^{i}$, $u=u\left(q^{1},...,q^{n}\right)$. In order that the thermodynamic
system be completely specified, one needs $n$ equations of state,
of the form 
\begin{equation}
p_{i}=\frac{\partial u}{\partial q^{i}}\left(q^{1},...,q^{n}\right)\doteq f_{i}\left(q^{1},\ldots,q^{n}\right)\,,
\end{equation}
where $p_{i}$ are intensive parameters, such as temperature or pressure.
The above relations give rise to $n$ constraints in the Hamiltonian
formalism, where $q^{i}$ are coordinates and $p_{i}$ are conjugate
momenta. These correspond precisely to a system with $n$ degrees
of freedom and $n$ primary constraints $\Phi_{i}=p_{i}-f_{i}(q)$.

Furthermore, because given two states in the thermodynamic configuration
space any trajectory connecting them must be a valid thermodynamic
path, there are no physical degrees of freedom in the corresponding
mechanical analog. As a result, either the number of first class constraints
is $n$, or this number is $k$, in case there are $p$ second-class
constraints $\chi_{i}$ among the constraints, such that $n=k+p/2$.
In either case, we are able to show that the Lagrange function 
\begin{equation}
L(q,\dot{q})=\left.p\dot{q}-H_{c}\right\vert _{p=p(q,\dot{q})}\label{lg}
\end{equation}
is a total derivative, where the Hamiltonian $H_{c}$ is a linear
combination of primary first-class constraints $\Phi_{i}$. 

\begin{proposition} \label{proposition1} Let the total set of irreducible
constraints $\{\Phi_{i}\}_{i=1}^{n}$ of a Hamiltonian system be time-independent
primary first-class constraints. Then the Lagrange function is a total
derivative. \end{proposition}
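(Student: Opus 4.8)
The plan is to exploit the very special structure forced by the hypotheses: $n$ irreducible first-class constraints on the $2n$-dimensional phase space $T^{\ast}Q$, together with a canonical Hamiltonian that is itself a linear combination of those constraints, $H_{c}=\lambda^{i}\Phi_{i}$. In the thermodynamic realization the constraints are $\Phi_{i}=p_{i}-f_{i}(q)$, and the first observation I would make is that the first-class requirement $\{\Phi_{i},\Phi_{j}\}=0$ is nothing but the integrability condition. A short computation with the bracket~(\ref{eq:poisson}) gives $\{\Phi_{i},\Phi_{j}\}=\partial f_{j}/\partial q^{i}-\partial f_{i}/\partial q^{j}$, so the constraints are in involution precisely when $\partial f_{i}/\partial q^{j}=\partial f_{j}/\partial q^{i}$. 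This is exactly the statement that the $f_{i}$ form a gradient, i.e.\ $f_{i}=\partial u/\partial q^{i}$ for the internal energy $u$.

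Next I would perform the inverse Legendre transform dictated by~(\ref{lg}). Because $H_{c}=\lambda^{i}\Phi_{i}$, Hamilton's equation reads $\dot{q}^{i}=\partial H_{c}/\partial p_{i}=\lambda^{i}$, which fixes the multipliers as the velocities, $\lambda^{i}=\dot{q}^{i}$. Substituting into $L=p_{i}\dot{q}^{i}-H_{c}$, the momentum terms cancel identically,
\begin{equation}
L=p_{i}\dot{q}^{i}-\dot{q}^{i}\bigl(p_{i}-f_{i}(q)\bigr)=f_{i}(q)\,\dot{q}^{i}\,,
\end{equation}
so no residual $p$-dependence survives the restriction $p=p(q,\dot{q})$. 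Using the potential from the first step,
\begin{equation}
L=\frac{\partial u}{\partial q^{i}}\dot{q}^{i}=\frac{du}{dt}\,,
\end{equation}
which is a total derivative, as claimed.

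Conceptually, I would package this through the geometry of the constraint surface $\Sigma=\{\Phi_{i}=0\}$, which also covers the general case independently of the specific form of the $\Phi_i$. An $n$-dimensional surface cut out by $n$ irreducible first-class constraints in a $2n$-dimensional symplectic space is coisotropic, and since $\dim(T_{x}\Sigma)^{\omega}=2n-n=n=\dim T_{x}\Sigma$, it is in fact Lagrangian. Hence $\omega|_{\Sigma}=0$, the tautological form $\theta=p_{i}\,dq^{i}$ restricts to a closed form, and locally $\theta|_{\Sigma}=dS$. Time-independence of the $\Phi_{i}$ keeps the trajectory on $\Sigma$, where $H_{c}$ vanishes, so $L=p_{i}\dot{q}^{i}=dS/dt$ without ever solving the constraints for the momenta.

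The main obstacle is that the Legendre transformation here is singular: the velocities do not determine the momenta, so the notation $p=p(q,\dot{q})$ in~(\ref{lg}) must be read with care. The resolution is that the momenta are pinned to the constraint surface while the velocity dependence is carried entirely by the multipliers through $\lambda^{i}=\dot{q}^{i}$; verifying that these two facts combine so that the $p_{i}\dot{q}^{i}$ contribution cancels and $H_{c}$ drops out is the crux of the argument. A secondary point I would watch is globality: $\theta|_{\Sigma}=dS$, and hence the primitive $u$, exist only locally (or globally on a simply connected $\Sigma$), which is all that the phrase \emph{total derivative} requires.
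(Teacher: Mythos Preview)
Your core argument in the first two paragraphs is exactly the paper's proof: the first-class condition on $\Phi_{i}=p_{i}-f_{i}(q)$ gives the integrability condition $\partial_{j}f_{i}=\partial_{i}f_{j}$, hence $f_{i}=\partial_{i}u$; Hamilton's equation fixes $\lambda^{i}=\dot{q}^{i}$; and then $L=p_{i}\dot{q}^{i}-\lambda^{i}\Phi_{i}=f_{i}\dot{q}^{i}=du/dt$.

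Your third paragraph, however, goes beyond the paper. The paper simply asserts that primary constraints have the form $p_{i}-f_{i}(q)$ and computes directly; your Lagrangian-submanifold argument ($n$ irreducible first-class constraints in a $2n$-dimensional phase space force $\Sigma$ to be Lagrangian, so $\theta|_{\Sigma}$ is closed and locally exact) is a genuinely different and more robust route that does not rely on any special resolved form of the $\Phi_{i}$. What this buys you is generality and a clean conceptual picture; what the paper's direct computation buys is an explicit identification of the primitive with the internal energy $u$ already introduced in the thermodynamic setup. Your remarks on the singular Legendre transform and on the locality of the primitive are correct and address points the paper leaves implicit.
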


\begin{proof} The constraints have the structure $\Phi_{i}=p_{i}-f_{i}(q)$,
because they are primary, and the fact that they Poisson-commute gives
the integrability condition 
\begin{equation}
\frac{\partial f_{i}}{\partial q^{j}}-\frac{\partial f_{j}}{\partial q^{i}}=0\ ,
\end{equation}
which implies that $f_{i}$ are the components of a gradient, $f_{i}=\partial\phi/\partial q^{i}$.
Since there are $n$ first-class constraints, there are no degrees
of freedom, so the Hamiltonian $H_{c}$ is proportional to constraints
\begin{equation}
H_{c}=\lambda^{i}\Phi_{i}\ ,
\end{equation}
where the Lagrange multipliers $\lambda^{i}$ are undetermined velocities,
$\lambda^{i}=\dot{q}^{i}$. Therefore, the Lagrange function is 
\begin{equation}
L=p_{i}\dot{q}^{i}-H_{c}=\dot{q}^{i}\frac{\partial\phi}{\partial q^{i}}=\frac{d\phi}{dt}~.
\end{equation}

\end{proof}

Proposition~\ref{proposition1} can be generalized to a system with
second-class constraints, provided the total number of degrees of
freedom stays the same. This is done in proposition~\ref{proposition2},
presented in the following.

\begin{proposition} \label{proposition2} Let $\{\Phi_{i}\}_{i=1}^{k}$
be a set of irreducible primary time-independent first-class constraints,
and $\{\chi_{i}\}_{i=1}^{p}$ a set of second-class constraints, such
that $n=k+p/2$, where $2n$ is the dimension of the symplectic manifold.
Then the Lagrange function is a total derivative. \end{proposition}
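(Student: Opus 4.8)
The plan is to reduce Proposition~\ref{proposition2} to the already-proved Proposition~\ref{proposition1} by eliminating the second-class constraints through the Dirac procedure, so that what remains is a purely first-class system of the form treated before. First I would recall that, because all $\{\Phi_i\}$ and $\{\chi_a\}$ are \emph{primary} constraints of the form ``momentum minus a function of coordinates,'' the full set of $p_i$ is determined by the $n$ coordinates $q^i$, and the constraint surface is $n$-dimensional, carrying no physical degrees of freedom. The second-class constraints $\chi_a$ come in an even number $p$, with a nonsingular antisymmetric matrix $C_{ab}=\{\chi_a,\chi_b\}$; on the constraint surface these may be solved (at least locally, by the implicit function theorem applied to $\det C\neq 0$) to express $p/2$ of the phase-space variables in terms of the rest. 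The key observation is that this elimination leaves a reduced symplectic manifold of dimension $2(n-p/2)=2k$, equipped with the Dirac bracket, on which the surviving constraints $\{\Phi_i\}_{i=1}^{k}$ become a set of $k$ first-class primary constraints.

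Next I would establish that the reduced constraints retain the structure $\Phi_i = p_i - f_i(q)$ after the elimination. Since the second-class constraints are themselves of primary type, solving them does not introduce momentum dependence beyond the linear $p_i$ already present; one checks that the restricted functions $\tilde f_i(q)$ remain gradients of the coordinates alone. The first-class property $\{\Phi_i,\Phi_j\}\approx 0$ passes to the Dirac bracket, $\{\Phi_i,\Phi_j\}_D\approx 0$, which yields the same integrability condition $\partial \tilde f_i/\partial q^j - \partial \tilde f_j/\partial q^i = 0$ on the reduced space, so that $\tilde f_i = \partial\phi/\partial q^i$ for some potential $\phi$. With $k$ first-class constraints and no remaining degrees of freedom, the canonical Hamiltonian on the reduced space is again a linear combination $H_c = \lambda^i \Phi_i$ with the multipliers fixed as velocities $\lambda^i=\dot q^i$, exactly as in Proposition~\ref{proposition1}.

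The conclusion then follows the same computation as before: evaluating the Lagrangian $L = p_i\dot q^i - H_c$ on the constraint surface gives $L = \dot q^i\, \partial\phi/\partial q^i = d\phi/dt$, a total derivative. I would phrase this so that the final displayed line reads $L = \dot q^{i}\,\partial\phi/\partial q^{i} = d\phi/dt$, parallel to the earlier proof, emphasizing that it is precisely the collapse to a pure-constraint Hamiltonian together with the gradient structure that forces the total-derivative form.

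The main obstacle I anticipate is making the reduction to the Dirac bracket clean enough that the reduced constraints are manifestly of the primary gradient type $p_i - \tilde f_i(q)$, and that their Dirac-bracket commutativity genuinely delivers the curl-free condition. In principle, solving the second-class constraints could mix coordinates and momenta in a way that spoils the simple ``momentum minus function of $q$'' form; the argument must show that, because \emph{all} constraints here are primary and linear in the momenta, this mixing does not occur and the gradient structure survives intact. Once that is secured, the remainder is a direct transcription of the first-class argument, and I would keep the exposition brief rather than re-deriving the Dirac bracket machinery in full.
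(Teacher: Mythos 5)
Your reduction strategy has a genuine gap at exactly the step you flag as the ``main obstacle,'' and the justification you offer for it does not hold. You need the surviving constraints, after solving the second-class ones, to be again of the gradient form $p_{i}-\tilde{f}_{i}(q)$ in canonical coordinates of the reduced $2k$-dimensional space, so that Dirac-bracket commutativity becomes the curl-free condition. Your argument for this is that ``all constraints here are primary and linear in the momenta,'' but the proposition only assumes this for the $\Phi_{i}$; the $\chi_{a}$ are arbitrary second-class constraints. In fact they \emph{cannot} all be of the form ``a distinct momentum minus a function of $q$'': there are $k+p$ constraints but only $n=k+p/2$ momenta, so at least $p/2$ of the $\chi_{a}$ must involve repeated momenta or none at all (in this paper's own applications they are gauge conditions, e.g. $p-p_{0}$ paired with a primary constraint in the \emph{same} momentum). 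When a second-class pair is of the type $\chi_{1}=p_{1}$, $\chi_{2}=q^{1}-G(p_{2},q^{2},\dots)$, solving it substitutes $q^{1}=G$ into the surviving $\Phi_{i}=p_{i}-f_{i}(q^{1},q^{2},\dots)$, which are then no longer linear in the reduced momenta, and $\{\Phi_{i},\Phi_{j}\}_{D}\approx0$ no longer translates directly into $\partial_{j}\tilde{f}_{i}-\partial_{i}\tilde{f}_{j}=0$. (A minor slip in the same passage: solving $p$ independent second-class constraints eliminates $p$ phase-space variables, not $p/2$; that is what makes the reduced dimension $2n-p=2k$.) There is a second, related gap: what you compute at the end is the Lagrangian of the \emph{reduced} system, while the proposition concerns the Lagrange function~(\ref{lg}) of the original system with all $n$ velocities; identifying the two requires knowing that the velocities along the eliminated directions vanish on-shell, which you never establish.

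The missing fact is precisely the one-line mechanism of the paper's proof, which bypasses reduction and Dirac brackets entirely: write $H_{c}=\lambda^{i}\Phi_{i}+\beta^{a}\chi_{a}$ (possible because $n=k+p/2$ leaves no degrees of freedom), and impose conservation of the $\chi_{a}$. Since the $\Phi_{i}$ are first class, $\{\chi_{a},\Phi_{i}\}\approx0$, so the consistency conditions read $\beta^{b}\{\chi_{a},\chi_{b}\}\approx0$, and invertibility of the second-class matrix forces $\beta^{a}\equiv0$. The second-class sector therefore drops out of the Hamiltonian and of the Legendre transform (the corresponding velocities vanish), and the computation collapses verbatim to that of Proposition~\ref{proposition1}, with no need to control the functional form of the $\chi_{a}$ at all. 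Your route could in principle be completed, but only by proving the structural claim above (essentially, exhibiting Darboux coordinates on the reduced space in which the restricted constraint surface is again the graph of a closed one-form), which is substantially harder than the consistency argument it replaces.
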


\begin{proof} The restriction $n=k+p/2$ implies there are no degrees
of freedom, so the Hamiltonian $H_{c}$ is as before proportional
to constraints 
\begin{equation}
H_{c}=\sum_{i=1}^{k}\lambda^{i}\Phi_{i}+\sum_{i=1}^{p}\beta^{i}\chi_{i}\,.
\end{equation}
The condition of preservation of the second-class constraints $\chi_{i}$
in time gives $\beta^{i}\equiv0$. Therefore, taking into account
the vanishing of the corresponding velocities $\beta^{i}=\dot{q}^{i}$
in the Lagrangian, one has the same situation as in Proposition~\ref{proposition1}.
Therefore, the Lagrange function is a total derivative. \end{proof}

Any theory constructed in the extended phase space can be dealt with
this formalism. Thus, we are able to propose a Lagrangian description
for the thermodynamic system. Moreover, the fact that the total Hamiltonian
vanishes on the constraint surface implies the Lagrangian is first-order
homogeneous in the velocities. In the example provided below, the
Lagrangian turns out to be the time-derivative of the internal energy
of the thermodynamic analog. In addition, since all mechanical systems
have vanishing physical degrees of freedom, one can always find a
canonical transformation connecting two systems of the same dimension.

\section{\label{vdw}Applications}

To illustrate the formalism and highlight its main advantages, in
what follows we apply the results presented so far to construct mechanical
analogs for the ideal, van der Waals and Clausius gases.

\subsection{Mechanical setup for ideal and van der Waals gases}

Both ideal and van der Waals gases can be described by the pairs of
conjugate variables%
\footnote{The lower case letters represent specific quantities, e.g., $s=S/N$,
where $N$ is the number of particles.%
} $\left(s,T\right)$ and $\left(v,-P\right)$, such that the internal
energy $u$ satisfies the relation $du=Tds-Pdv$. For instance, in
the energy representation, the van der Waals gas is described by the
equations of state 
\begin{equation}
T\left(u,v\right)=\frac{2}{3}\left(u+\frac{a}{v}\right)\,,\,\, P\left(T,v\right)=\frac{T}{v-b}-\frac{a}{v^{2}}\,,\label{vdw-eq-state}
\end{equation}
with non zero constants $a$ and $b$. The case $a=b=0$ in Eq.~(\ref{vdw-eq-state})
represents the ideal gas.

The symplectic space for the construction is the phase space $\mathbb{R}^{4}$
with coordinates $\left(q,p,\tau,\pi\right)$, such that the Poisson
brackets are given by 
\begin{equation}
\left\{ f,g\right\} =\frac{\partial f}{\partial q}\frac{\partial g}{\partial p}+\frac{\partial f}{\partial\tau}\frac{\partial g}{\partial\pi}-\frac{\partial g}{\partial q}\frac{\partial f}{\partial p}-\frac{\partial g}{\partial\tau}\frac{\partial f}{\partial\pi}\,.
\end{equation}
In this space we introduce the Hamilton function $H=\pi+h\left(q,p,\tau\right)$,
where $h\left(q,p,\tau\right)$ is the reduced Hamiltonian, to be
determined. We have shown that on the surface $H=0$ and $\tau-t=0$
the given Hamiltonian system simplifies to the system with reduced
Hamilton function $h$ in the reduced phase space $\left(q,p\right)$.

In order to map the thermodynamic variables to mechanic ones, let
us identify the tautological form $\theta$ with $du$, $\theta=pdq+\pi d\tau\equiv du$.
A possible dictionary between variables is 
\begin{equation}
\tau=s,\,\pi=T\,,q=v\,,p=-P\ .\label{eq:mec-thermo-dic}
\end{equation}
Other choices, as we will show, differ by a canonical transformation.
In this way, equations of state are translated as 
\begin{equation}
\pi=\left(q-b\right)\left(\frac{a}{q^{2}}-p\right)\,,\,\,\pi=\frac{2}{3}\left(u+\frac{a}{q}\right)\,.\label{vdwha}
\end{equation}

\subsection{Solving the ideal gas within a mechanical framework}

We proceed setting $a=b=0$, i.e, we will work with the ideal gas
applying the formalism introduced, in order to obtain a fundamental
equation. After solving this problem we show how to obtain the solution
for the van der Waals equation using a canonical transformation. By
doing this we can not only apply the formalism, but also illustrate
how the canonical transformation technique can be used to solve a
thermodynamic problem. 

For the ideal gas, Eq.~(\ref{vdwha}) takes the simpler form 
\begin{equation}
\pi=-qp\,,\,\,\pi=2u/3\,.\label{ideal}
\end{equation}
In order to incorporate the identification $\theta\equiv du$, we
take the exterior derivative of the second relation in Eq.~(\ref{ideal}),
\begin{equation}
d\pi=\frac{2}{3}\left(pdq+\pi d\tau\right)\,.\label{eq:diff}
\end{equation}
These relations provide two constraints in phase space. Substituting
in Eq.~(\ref{eq:diff}) the first relation in Eq.~(\ref{ideal}),
we have 
\begin{equation}
d\pi=\frac{2}{3}\left(pdq-qpd\tau\right)=-d\left(pq\right)\,.
\end{equation}
Integrating, one obtains 
\begin{equation}
p=-Ae^{\frac{2}{3}\tau}q^{-\frac{5}{3}}\,,
\end{equation}
where $A$ is a constant. Therefore, we get the constraint 
\begin{equation}
\phi\left(q,p,\tau\right)=p+Ae^{\frac{2}{3}\tau}q^{-\frac{5}{3}}\,.\label{eq:constraint-phi}
\end{equation}
The first relation in Eq.~(\ref{ideal}) states that $\pi+qp=0$.
We use the first constraint (\ref{eq:constraint-phi}) to eliminate
the momentum $p$ and write the second constraint in the form of a
primary constraint, 
\begin{equation}
H=\pi-Ae^{\frac{2}{3}\tau}q^{-\frac{2}{3}}\,.
\end{equation}
That is, the reduced Hamilton function is 
\begin{equation}
h\left(q,\tau\right)=-Ae^{\frac{2}{3}\tau}q^{-\frac{2}{3}}\,.
\end{equation}
As a result, the canonical Hamilton function is given by $H_{c}=\sigma H+\lambda\phi$,
where $\sigma$ and $\lambda$ are Lagrange multipliers.

The conservation equations for the constraints are proportional to
constraints, so they do not fix the Lagrange multipliers, 
\begin{align}
 & \frac{dH}{dt}=\left\{ H,H_{c}\right\} =\lambda\left\{ H,\phi\right\} =\lambda\phi\,,\\
 & \frac{d\phi}{dt}=\left\{ \phi,H_{c}\right\} =\sigma\left\{ \phi,H\right\} =-\sigma\phi\,.
\end{align}
The constraints are thus preserved on the constraint surface. The
total set of constraints is first-class. Therefore, as expected, there
are no degrees of freedom. In what follows, we shall impose the chronological
gauge (\ref{vc}) $\psi=\tau-t$. As a result, one has $\sigma=1$
and $H_{c}=H+\lambda\phi$, where $\lambda\left(t\right)$ is an arbitrary
function of time which embodies the gauge freedom of the model.

Going back to thermodynamic variables (\ref{eq:mec-thermo-dic}),
we see that the constraints $H=0$ and $\phi=0$ give the equations
\begin{equation}
T-vP=0\,,\,\, P-Ae^{\frac{2}{3}s}v^{-\frac{5}{3}}=0\,.
\end{equation}
These constraints, together with the second relation in Eq.~(\ref{ideal}),
give us 
\begin{equation}
u\left(s,v\right)=\frac{3}{2}\frac{A}{v^{2/3}}\exp\left(\frac{2}{3}s\right)\,,\label{ideal-internal-energy}
\end{equation}
which is the fundamental equation of the ideal gas in the entropy
representation. The constant $A$ expresses the freedom in the zero
value of the entropy (that can be fixed by Nernst theorem).

\subsection{From the ideal gas to the van der Waals gas}

As mentioned in the previous section, the above development also allows
us to write a Lagrangian description for the ideal gas. The Lagrange
function for this model is given by Eq.~(\ref{lg}), 
\begin{equation}
L\left(q,\dot{q},\tau\right)=Ae^{\frac{2}{3}\tau}q^{-\frac{5}{3}}\left(\dot{\tau}q-\dot{q}\right)\,.\label{lvd}
\end{equation}
Not surprisingly, the Lagrangian is first-order homogeneous in the
velocities, and a total derivative, $L\left(q,\dot{q},\tau\right)=du/dt$,
where $u$ is the internal energy (\ref{ideal-internal-energy}).
In other words, the Lagrange function obtained and Dirac's theory
for constrained systems together give the fundamental equation for
the ideal gas.

We can now show explicitly a canonical transformation connecting the
ideal gas to the van der Waals gas, as an example of the fact that
there must be such a transformation connecting two theories in $\mathbb{R}^{4}$
with vanishing physical degrees of freedom. Let us use primes to indicate
the quantities referring to the van der Waals gas ($u^{\prime},\pi^{\prime},q^{\prime},p^{\prime},\tau^{\prime}$).
Comparing the first expression in Eq.~(\ref{ideal}) with the first
expression in Eq.~(\ref{vdwha}), we see that the corresponding constraints
can be related by means of the canonical transformation 
\begin{equation}
q=q^{\prime}-b~,\ p=p^{\prime}-aq^{\prime-2}\;,\ \pi=\pi^{\prime}~,\;\tau=\tau^{\prime}.\label{t4}
\end{equation}
Thus, the total set of constraints for the van der Waals gas is 
\begin{align}
H^{\prime}\left(\eta^{\prime}\right) & =H\left(\eta\left(\eta^{\prime}\right)\right)=\pi^{\prime}-Ae^{\frac{2}{3}\tau^{\prime}}(q^{\prime}-b)^{-\frac{2}{3}}~,\nonumber \\
\phi^{\prime}\left(\eta^{\prime}\right) & =\phi\left(\eta\left(\eta^{\prime}\right)\right)=p^{\prime}-\frac{a}{q^{\prime2}}+\frac{Ae^{\frac{2}{3}\tau^{\prime}}}{\left(q^{\prime}-b\right)^{\frac{5}{3}}}\;.
\end{align}
where $\eta=\left(q,\tau,\pi\right)$. Since the transformation is
time-independent, the transformed Hamilton function is 
\begin{equation}
H_{c}^{\prime}=\sigma H^{\prime}+\lambda\phi^{\prime}~.
\end{equation}
Inserting the canonical transformations in the tautological form $du=pdq+\pi d\tau$,
one has 
\begin{equation}
du=\left(p^{\prime}-aq^{\prime-2}\right)dq^{\prime}+\pi^{\prime}d\tau^{\prime} 
\label{t1}
\end{equation}
which gives the second equation in (\ref{vdw-eq-state}). Using the
fact that the transformation is canonical, the transformed tautological
form $du^{\prime}=p^{\prime}dq^{\prime}+\pi^{\prime}d\tau^{\prime}$
differs from $du$ by an exact differential, 
\begin{equation}
du-du^{\prime}=ad\left(q^{\prime-1}\right)\Rightarrow u=u^{\prime}+\frac{a}{q^{\prime}}~.
\end{equation}
Finally, from the expression for $u$ in Eq.~(\ref{ideal-internal-energy}),
we get the van der Waals equation of state 
\begin{equation}
u^{\prime}=\frac{3}{2}\frac{Ae^{\frac{2}{3}\tau^{\prime}}}{\left(q^{\prime}-b\right)^{2/3}}-\frac{a}{q^{\prime}}\doteq\frac{3}{2}\frac{A}{\left(v-b\right)^{2/3}}\exp\left(\frac{2}{3}s\right)-\frac{a}{v}~.\label{eq:vdw-internal-energy-1}
\end{equation}

\subsection{Canonical transformations and gauge fixing}

If, instead of using the thermodynamics-mechanics dictionary in Eq.~(\ref{eq:mec-thermo-dic}), we had initially made the identification
\begin{equation}
p=T~,\ q=s~,\ \pi=-P~,\ \tau=v~,\label{eq:mec-thermo-dic2}
\end{equation}
that is, if we had made the canonical transformation $\left(q,p,\tau,\pi\right)\mapsto\left(q^{\prime},p^{\prime},\tau^{\prime},\pi^{\prime}\right)$,
where $q^{\prime}=\tau=s$, $p^{\prime}=\pi=T$, $\tau^{\prime}=q=v$,
$\pi^{\prime}=p=-P$, the canonical Hamilton function would
be $H_{c}=H^{\prime}+\lambda\phi^{\prime}$, where, for the ideal
gas, 
\begin{align}
 & H^{\prime}(\eta^{\prime})=H\left(\eta\left(\eta^{\prime}\right)\right)=p^{\prime}+\tau^{\prime}\pi^{\prime}\,,\label{id1}\\
 & \phi^{\prime}(\eta^{\prime})=\pi^{\prime}+Ae^{\frac{2}{3}q^{\prime}}\tau^{\prime-\frac{5}{3}}\,.\label{id2}
\end{align}
One can locally transform $\left(p,\pi,q,\tau\right)\mapsto\left(p^{\prime},\pi^{\prime},q^{\prime},\tau^{\prime}\right)$
by means of the generating function of the second kind $W\left(q,\tau,p^{\prime},\pi^{\prime}\right)=\pi^{\prime}q+p^{\prime}\tau$.
Then, as $d(u-u^{\prime})=dW$, one has 
\begin{equation}
u^{\prime}=u-Ts+Pv~,
\end{equation}
which is the Gibbs free energy.

Let us now see how the gauge fixing of $\lambda$ manifests itself
in the thermodynamic description. Let us consider the isobaric process
$p=p_{0}$. This constraint, which we denote $\tilde{\phi}=p-p_{0}$,
allows us to fix $\lambda$: 
\begin{equation}
\left\{ \tilde{\phi},H_{c}\right\} _{\Phi=0}=0\Leftrightarrow\lambda=-q-\frac{2\pi}{p}\,.
\end{equation}
The evolution of $p$ gives trivially 
\begin{equation}
\frac{dp}{dt}=\left\{ p,H_{c}\right\} _{\Phi=0}=0\,,
\end{equation}
that is, the pressure $p$ is constant. On the other hand, 
\begin{equation}
\frac{dq}{dt}=\left\{ q,H_{c}\right\} _{\Phi=0}=-\frac{2\pi}{5p_{0}}\,.\label{eq:dqdt}
\end{equation}
The temperature $\pi$ has to naturally compensate the pressure, so
the process remains isobaric: 
\begin{equation}
\frac{d\pi}{dt}=\left\{ \pi,H_{c}\right\} _{\Phi=0}=-\frac{2}{3}p_{0}\left(q+\frac{2\pi}{5p_{0}}\right)\,.\label{dpi_dq}
\end{equation}
Indeed, from Eq.~(\ref{eq:dqdt}) and Eq.~(\ref{dpi_dq}) one has
\begin{equation}
\frac{d\pi}{dq}=-p_{0}\,,
\end{equation}
which is precisely the temperature variation with relation to volume
$\partial\pi/\partial q$, that we obtain from the equation of state
\begin{equation}
\pi\left(q\right)=-p_{0}q\,.
\end{equation}
We thus see the imposition of supplementary conditions to the equations of state is akin to gauge fixing the theory.

\subsection{Clausius gas}

As an additional example of application of the method introduced in
the present work, let us consider the Clausius gas. This system is
described by the following equation of state, 
\begin{equation}
P=\frac{T}{v-b}+\frac{a}{T\left(v-c\right)^{2}}~.\label{cl}
\end{equation}
Since it is not possible to isolate $T$ in Eq.~(\ref{cl}), the
identification proposed in Eq.~(\ref{eq:mec-thermo-dic}) is not
appropriate. However, with the dictionary presented in Eq.~(\ref{eq:mec-thermo-dic2}),
the mechanical analog of Eq.~(\ref{cl}) is 
\begin{equation}
\tilde{H}=\tilde{p}+\left(\tilde{\tau}-b\right)\left[\tilde{\pi}+\frac{a}{\tilde{p}\left(\tilde{\tau}-c\right)^{2}}\right]=0~,\label{pi}
\end{equation}
where in Eq.~(\ref{pi}) we are using ``tildes''
to indicate Clausius gas quantities.

We can transform expression~(\ref{pi}) into the constraint 
\begin{equation}
H^{\prime}\left(\eta^{\prime}\right)=\tilde{H}\left(\tilde{\eta}\left(\eta^{\prime}\right)\right)=p^{\prime}+\tau^{\prime}\pi^{\prime}\,,
\end{equation}
associated to the ideal gas (\ref{id1}) in coordinates given in Eq.~(\ref{eq:mec-thermo-dic2}).
The canonical transformation that relates the two sets of coordinates
is given by 
\begin{equation}
\tilde{\tau}=\tau^{\prime}+b\,,\,\,\tilde{\pi}=\pi^{\prime}-\frac{ap^{\prime-1}}{\left(\tau^{\prime}+b-c\right)^{2}}\,,\,\,\tilde{q}=q^{\prime}+\frac{ap^{\prime-2}}{\tau^{\prime}+b-c}\,,\,\,\tilde{p}=p^{\prime}\,.\label{trans}
\end{equation}
It follows that the constraint in Eq.~(\ref{id2}) becomes 
\begin{equation}
\tilde{\phi}=\phi^{\prime}\left(\eta^{\prime}\left(\tilde{\eta}\right)\right)=\tilde{\pi}+\frac{a\tilde{p}^{-1}}{\left(\tilde{\tau}-c\right)^{2}}+\frac{A}{\left(\tilde{\tau}-b\right)^{\frac{5}{3}}}\exp\left[\frac{2}{3}\left(\tilde{q}-\frac{a\tilde{p}^{-2}}{\tilde{\tau}-c}\right)\right]~.
\end{equation}

One can obtain the internal energy for the Clausius gas as was done
previously for the van der Waals gas, by calculating the difference
between the tautological forms in their tilde and prime versions:
\begin{equation}
du^{\prime}=d\tilde{u}-d\left(\frac{1}{\tilde{p}}\frac{2a}{\left(\tilde{\tau}-c\right)}\right)~.
\end{equation}
As a result, the internal energy is 
\begin{equation}
\tilde{u}=u^{\prime}+\frac{1}{\tilde{p}}\frac{2a}{\left(\tilde{\tau}-c\right)}=\frac{3}{2}T+\frac{1}{T}\frac{2a}{\left(v-c\right)}~.\label{ucl}
\end{equation}

Setting $\tilde{\phi}=0$ and using Eq.~(\ref{pi}) we have 
\begin{equation}
s=\frac{a}{T^{2}\left(v-c\right)}+\ln\left(v-b\right)+\frac{3}{2}\ln\left(\frac{T}{A}\right)~,\label{s_clausius}
\end{equation}
where we employed the thermodynamic variables presented in Eq.~(\ref{eq:mec-thermo-dic2}).
Using Eq.~(\ref{s_clausius}) and Eq.~(\ref{ucl}), one can construct
a fundamental equation for the Clausius gas. For example, the Helmholtz
free energy $f=\tilde{u}-Ts$ becomes 
\begin{equation}
f=\frac{a}{T\left(v-c\right)}+\frac{3}{2}T\left[1-\ln\frac{T}{A}-\ln\left(v-b\right)^{\frac{2}{3}}\right]~.
\end{equation}

\section{\label{conclusion}Conclusions and perspectives}

In this work we present a new formalism for constructing mechanic
analogs of thermodynamic systems. Taking previous developments into
account, we believe that in the present work we have a complete duality
between Analytic Mechanics and Thermodynamics. In fact, thermodynamic
systems (as described here) can be completely characterized in the
language of Dirac's theory of constrained systems. 

One important feature of the theoretical framework introduced is the
possibility of a Lagrangian formulation. The fact that a thermodynamic
system is not dynamical from the viewpoint of Mechanics implies that
the associated Lagrangian is a total derivative in time. This is an
important point, because as a result it is always possible to construct
a canonical transformation associating any two thermodynamic systems
with the same number of mechanical degrees of freedom. Moreover, the
primitive function for this Lagrangian furnishes a fundamental equation
for the thermodynamic description. 

We study the particular cases of the ideal, van der Waals and Clausius
gases, and we show how different dictionaries between mechanic and
thermodynamic variables are related by canonical transformations.
Moreover, these transformations also relate thermodynamic potentials.
Furthermore, we explicitly verify how the gauge freedom of the mechanical
analog is associated with the restrictions present in thermodynamic
processes. As an illustrative application, we have easily obtained
the solution of the van der Waals and Clausius gases from the (much
simpler) ideal gas solution.

It is interesting to remark that the proposed formalism gives a very
simple development for the Clausius gas system. A problem involving
a set of partial differential equations was transformed into an algebraic
problem. Of course, it is not always straightforward to find a canonical
transformation that associates a given system to the ideal gas (for
example). Still, in principle, the approach has the potential to be
extremely helpful.

It should be stressed that Dirac's theory not only gives adequate
tools for the treatment of systems with constraints, but also it provides
a guide to the quantization of these systems. The conjecture that
conjugated thermodynamic variables, e.g., pressure and volume, obey
``uncertainty relations'' was already proposed
in quantum mechanics in the early days by Bohr and Heisenberg. They
even obtained an explicit form for this uncertainty principle involving
internal energy and temperature \cite{bh}. From a more formal perspective,
using arguments based on statistical mechanics, these same relations
were again obtained more recently \cite{Rosen,Mand,Sch,WilWl2011}.
The scenario sketched here suggests that Thermodynamics has intrinsic
uncertainty relations, which could lead to noncommutativity of the
thermodynamic variables.

Still, up until now, there has been no solid framework for obtaining
thermodynamic uncertainty relations. We expect that the approach we
have developed, complemented with canonical quantization, should provide
a better understanding of the thermodynamic uncertainty relations.
Besides, since we have a usual mechanical system describing Thermodynamics,
it is clear how to develop the Hamilton-Jacobi formalism, albeit the
presence of constraints \cite{Rothe2003,Pimentel2014}. These ideas will be considered in a future
development of the present work.

\bigskip

\begin{acknowledgments}
C. M. is supported by FAPESP, Brazil (grant No.2015/24380-2) and CNPq, Brazil (grant No.307709/2015-9).
\end{acknowledgments}

\bigskip


\begin{thebibliography}{99}


\bibitem{dir} P.A.M. Dirac, Lectures on Quantum Mechanics, Dover, 2001.

\bibitem{Peter79} M.J. Peterson, Analogy between thermodynamics
and mechanics, Am. J. Phys. 47 (1979) 488--490.

\bibitem{gibbs} J. Gibbs, Thermodynamics, Vol. 1, Yale University Press, New
Haven, 1948.

\bibitem{caratheodory} C. Carath\'{e}odory, Math. Ann. 67 (1909) 355–386. Translated by J. Kestin, Investigations into the foundations of thermodynamics, in J.Kestin, The Second law of thermodynamics, Volume 5 of Benchmark papers on energy, Dowden, Hutchinson \& Ross, 1976.

\bibitem{hermann} R. Hermann, Geometry, physics, and systems, M. Dekker Publisher, 1973.

\bibitem{mrugala1990} R. Mrugala, J.D. Nulton, J.C. Schon, P. Salamon, Statistical approach to the geometric structure of thermodynamics, Phys. Rev. A 41 (1990) 3156--3160.

\bibitem{salamon1983} P. Salamon, R.S. Berry, Thermodynamic
length and dissipated availability, Phys. Rev. Lett. 51 (1983) 1127--1130.

\bibitem{crooks2007} G.E. Crooks, Measuring thermodynamic
length, Phys. Rev. Lett. 99 (2007) 10--13. arXiv:0706.0559.  

\bibitem{bravetti2015} A. Bravetti, C. S. Lopez-monsalvo, F.  Nettel, Contact symmetries and Hamiltonian thermodynamics, Ann. Phys. 361 (2015) 377--400. arXiv:1409.7340.

\bibitem{mrugala1993} R. Mrugala, Continuous contact transformations
in thermodynamics, Rep. Math. Phys. 33 (1993) 149--154. 

\bibitem{rajeev2008} S.G. Rajeev, A Hamilton-Jacobi formalism for
thermodynamics, Ann. Phys. 323 (2008) 2265--2285. arXiv:0711.4319.

\bibitem{rajeev2008b} S.G. Rajeev, Quantization of contact manifolds
and thermodynamics, Ann. Phys. 323 (2008) 768--782. arXiv:math-ph/0703061.

\bibitem{gambar1994} K. Gamb\'{a}r, F. M\'{a}rkus, Hamilton-Lagrange
formalism of nonequilibrium thermodynamics, Phys. Rev. E 50 (1994) 1227--1231.

\bibitem{arn} V.I. Arnold, Mathematical Methods of Classical
Mechanics, Springer, 1997.

\bibitem{gitman-tyutin} D.M. Gitman and I.V. Tyutin, Quantization
of fields with constraints, Springer-Verlag, 1991.

\bibitem{bh} N. Bohr, Collected Works, edited by J. Kalckar, Vol. 6, North-Holland, Amsterdam, 1985, pp. 316--330, 376--377.

\bibitem{Rosen} L. Rosenfeld, Questions of irreversibility and
ergodicity in Ergodic Theories, Proceedings of the International
School of Physics ``Enrico Fermi'', Course XIV,
Varenna 1960, edited by P. Caldirola, Academic Press, New York, 1961.

\bibitem{Mand} B. Mandelbrot, Temperature fluctuation: a well-defined
and unavoidable notion, Physics Today 42, (1989) 71--73.

\bibitem{Sch} F. Schl\"{o}gl, Thermodynamic uncertainty relation, J.
Phys. Chem. Solids 49 (1988) 679--683.

\bibitem{WilWl2011} G. Wilk and Z. Wlodarczyk, Generalized
thermodynamic uncertainty relations, Physica A 390 (2011) 3566--3572. arXiv:1102.1824.

\bibitem{Rothe2003} K.D. Rothe, F.G.  Scholtz, On the Hamilton – Jacobi equation for second-class constrained systems, Ann. Phys. 308 (2003)  639--651.

\bibitem{Pimentel2014} M.C. Bertin, B.M.  Pimentel, C.E. Valc\'{a}rcel,  Involutive constrained systems and Hamilton-Jacobi formalism, J. Math. Phys. 55(11) (2014) 112901. 


\end{thebibliography}
\end{document}